\newtheorem{Thm}{Theorem}
\newtheorem{Lem}[Thm]{Lemma}
\newtheorem{Obs}{Observation}
\newtheorem{Fact}{Fact}
\newtheorem{Case}{Case}
\newenvironment{proof}{\noindent {\textbf{Proof }}}{$\Box$ \medskip}
\begin{document}
\title{Fixed Parameter Tractable Algorithm for Firefighting Problem}
\author{Ming Lam Leung\thanks{Department of Computer Science and Engineering, The Chinese University of Hong Kong, Shatin, Hong Kong SAR, China}}
\maketitle


\abstract{The firefighter problem is defined as below. A fire initially breaks out at a vertex r on a graph G. In each step, a firefighter chooses to protect one vertex, which is not yet burnt. And the fire spreads out to its unprotected neighboring vertices afterwards. The objective of the problem is to choose a sequence of vertices to protect, in order to save maximum number of vertices from the fire.\\

In this paper, we will introduce a parameter k into the firefighter problem and give several FPT algorithms using a random separation technique of Cai, Chan and Chan. We will prove firefighter problem is FPT on general graph if we take total number of vertices burnt to be a parameter. If we parameterize the number of protected vertices, we discover several FPT algorithms of the firefighter problem on degree bounded graph and unicyclic graph. Furthermore, we also study the firefighter problem on weighted and valued graph, and the problem with multiple fire sources on degree-bounded graph.}

\section{Introduction}
The {\bf firefighter problem} is a discrete-time game on a graph G defined as below. Initially, a fire breaks out at a vertex r of G. For each time step, a firefighter choose one vertex not yet on fire to protect it (that vertex remains protected thereafter), and then the fire spreads from the vertices on fire to all of their unprotected neighbours. This process continues round by round. The game ends when the fire cannot spread anymore, i.e. all of the neighbours of the burnt vertices are already protected. Then the vertices not on fire are considered saved. The objective of the problem is to choose a sequence of vertices to protect, in order to save maximum number of vertices in the graph from the fire.\\

The firefighter problem is introduced by Hartnell \cite{Har95} in 1995 and can be used to model the spread of fire, diseases, computer viruses in a marco-control level. The firefighter problem is shown to be NP-complete even for trees of maximum degree 3 by Finbow et al. \cite{FKMR07} Actually, Hartnell and Li \cite{HL9} proved that a simple greedy method for trees is a 0.5-approximation algorithm, and MacGillivray and Wang \cite{MW03} have solved the problem in polynomial time in some subclass of trees. Recently, Cai, Verbin and Yang \cite{CVY08} have obtained a (1-1/$e$)-approximation algorithm for trees based on a linear programming relaxation and randomized rounding, and they have also considered fixed parameter tractability of the problem on trees. Besides, Cai and Wang \cite{C09} also study the defending ability of a graph as a whole by considering the average percentage of vertices the firefighter can save. Furthermore, various aspects of the problem for d-dimensional grids have been considered by Develin and Hartke \cite{DH07}, Fogarty \cite{Fog03}, Wang and Moeller \cite{WM02}, and Cai and Wang \cite{C09}, among others.\\

In this report, we consider fixed parameter tractability of the firefighter problem on various graphs other than trees or  d-dimensional-grid. We establish several FPT algorithms of the problem, either use number of vertices burnt or number of vertices protected to be parameter. Our main results are:

\begin{enumerate}
\item Parameter k on number of vertices burnt
\begin{enumerate}
\item Firefighter problem on general graph is FPT
\end{enumerate}
\item Parameter k on number of vertices protected (i.e. k = no. of rounds)
\begin{enumerate}
\item Firefighter problem on degree-bounded graph is FPT

\item  Firefighter problem on unicyclic graph is FPT

\end{enumerate}
\item Reduction of Firefighter problem on weighted and valued graph
\item Firefighter problem with multiple fire sources, multiple protection and multiple burning can be reduced to the original firefighter problem
\end{enumerate}

The main tool we use is the $random\;separation$ method of Cai, Chan and Chan \cite{CCC06}.  In the rest of the paper, we fix our notations and give definitions in Section 2. We then show that the parameterized firefighter problem with number of vertices burnt as parameter is FPT. We then establish FPT algorithms on degree bounded graph (Section 4) and unicyclic graph (Section 5). We also study the possibility of generalizing the algorithm on unicyclic graph to the family of tree+b edges. In Section 6, we will discuss the possibility of finding a solution in the general firefighter problem on the weighted and unequal valued graph, using local replacement. Finally, we will explore further in the firefighter problem with multiple fire sources, multiple protection and multiple burning in Section 7.


\section{Definitions and Notation}
Here we define some terms. Let $G$ be a undirected graph with a source vertex $s$, which is the origin of the fire. A vertex is $burnt$ once it is on fire, and $protected$ once it is protected by the firefighter, and $saved$ if it is not burnt at the end of the game. We assume the game ends at time step $t$, which means that the fire can no longer spread to any unburnt vertices after the firefighter protected t vertices. A $strategy$ for the $Firefighter$ problem is a sequence $\{v_1,v_2,...,v_t\}$ of protected vertices of $G$ such that vertex $v_i$ is protected at time $i$, where $1 \le i \le t$. An $optimal\;strategy$ of a $Firefighter$ problem is a $strategy$ maximizing the total number of vertices saved on the graph $G$.\\

We let $V$ and $E$ be the vertex set and edge set of graph $G$, and we let $n$ and $m$ be the number of vertices and edges of the graph $G$ respectively, i.e. $n=|V|, m=|E|$. For any set of vertices $V'$, We denote $N(V')$ to be the set of neighbours of $V'$. We also define $N[V']=N(V') \cup V'$.\\

Define $E(V')$ be the set of edges in the induced subgraph containing vertices $V'$, and $E(V_1,V_2)$ be the set of cross edges linking the vertex set $V_1$ and $V_2$. Define $E[V_1,V_2] = E(V_1,V_2) \cup E(V_1) \cup E(V_2)$.\\

We define $merge(G;V')$ to be a graph $G'$ constructed by merging the vertices in set $V'$ in $G$ into one single vertex. For example, $merge(G;$\{u,v\}$)$ is a graph constructed by merging the vertices $u$ and $v$ into one vertex.\\

A graph $G$ is said to be $degree\;bounded$ if the degree of each vertex on $G$ is bounded by a fixed constant $d$. 
A graph $G$ is a $unicyclic$ graph if the graph contains only 1 cycle, i.e. it has n vertices and n edges. Note that a $unicyclic$ graph can also be seen as a tree plus one extra edge.\\

In this paper, we mainly consider FPT algorithms for the following three versions of the $Firefighter$ problem. 
\begin{enumerate}
\item \textbf{At Most $k$ Vertices Burnt}: At most $k$ vertices are burnt at the end of the game. We ask if there is a strategy saving at least $n-k$ vertices.

\item \textbf{Exactly $k$ Vertices Burnt}: The parameter $k$ is the number of vertices burnt at the end of the game. We ask if there is a strategy saving at least $n-k$ vertices.

\item \textbf{Maximum $k$-Vertex Protection}: The parameter $k$ is the upper bound of number of protected vertices. We need to find a strategy to protect at most $k$ vertices in order to maximize the number of vertices saved.

The main tool we use is the $random\;separation$ method of Cai, Chan and Chan \cite{CCC06}. This technique produces randomized algorithms in FPT time. We can derandomize the algorithms using $universal\;sets$ \cite{NSS95}. Naor et al. give a construction of a (n,t)-universal set of cardinality $2^tt^{O(\log t)}\log n$ in time $2^tt^{O(\log t)}n \log n$. A ($n$,$t$)-$universal$ set is a set of binary vectors of length $n$ while all the $2^t$ configurations appear in the set for every subset of size t of the indices.\\

\end{enumerate}
We are going to construct a FPT algorithm to solve the problems \textbf{At Most $k$ Vertices Burnt} and \textbf{Exactly $k$ Vertices Burnt} on general graph in Section 3. And then we will focus on the problem \textbf{Maximum $k$-Vertex Protection} in Section 4 and 5, which is proven to be W[1]-hard in general graph. We will introduce FPT algorithm to solve it on degree bounded graph in Section 4, and show that it is FPT on unicyclic graph in Section 5. Besides, we study the reduction of the problem on weighted and valued graph and discuss about the case with multiple fire sources, multiple protection and multiple burning in Section 6 and 7.\\

The complexity of all FPT algorithms introduced in this report are summarized in Table 1.\\

\begin{table}[ht]
\caption{Summary of FPT algorithms} 
\centering 
\begin{tabular}{c c c c} 
\hline\hline 
Problem & Randomised & Deterministic  &  \\ [0.5ex] 
\hline 
\textbf{At Most $k$ Vertices Burnt} & $O(4^kn)$  & $2^{O(k)}n \log n$ \\ 
\textbf{Exactly $k$ Vertices Burnt} & $O(4^kn)$  & $2^{O(k)}n \log n$ \\
\textbf{Maximum $k$-Vertex Protection} \\
Degree-Bounded Graph & O($3^{k^2(d+1)}(n+m)$) & --- \\
Unicyclic Graph & $k^{-O(k)}n^3$ & $k^{-O(k)}n^3\log n$ \\
Tree+b edges& $k^{-O(k)}n^{(2b+1)}$ & $k^{-O(k)}n^{(2b+1)}\log n$ \\
\hline 
\end{tabular}
\label{table:nonlin} 
\end{table}


\section{At Most k Vertices Burnt on general graph}
In this section, we are considering the \textbf{At Most k Vertices Burnt} problem on general graph. We call a strategy S satisfying if it saves at least $n-k$ vertices, i.e. the total number of vertices burnt is at most $k$. The goal of the problem is to find a satisfying strategy if one exists.\\

The algorithm first colors each vertex of G randomly and independently by either green or red with probability $1/2$. We call a coloring of G a $good\;coloring$ if there exists a satisfying strategy $S'$ such that all vertices in $S'$ are green, and all the burnt vertices are red.\\

\begin{center}
\includegraphics{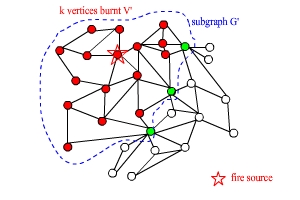}\\
\underline{Figure 1: A good coloring example of \textbf{At Most k Vertices Burnt}}
\end{center}

Given a good coloring of $G$, we can find the specific satisfying strategy $S'$ as the following. First, starting from the source s, do breath-first-search only on the red vertices to get a red subtree $V'$ rooted at s. It takes linear time on number of vertices and edges of the subgraph $V'$. After the BFS, all of the neighbours of $V'$ have to be green, and $V^*=N(V')$ must be a satisfying strategy if the coloring is good. If $|V'|\le k$, we check if the green vertex set $V^*=N(V')$ is a valid strategy set for the problem using the procedure below:

\begin{center}
\fbox{
\begin{minipage}[l1pt]{5.50in}
{\bf  Procedure for verifying if $V^*$ is a satisfying strategy set} \\

{\bf Input}: graph G, source s, vertex set $V^*$ to be verified, and burnt vertex set $V'$\\
{\bf Output}: If true, output the sequence of satisfying strategy in correct order. If false, output "no".
\begin{enumerate}
	\item Do BFS on the subgraph $V'$ starting from s, find the distance $d[s,v]$ of the each vertices $v$ on $V^*$ from the source $s$ on the subgraph $G'=\{N[V'],E(V')\cup E(V',V^*)\}$.  
	\item Sort the vertices $V^*$ according to $d[s,v]$ in ascending order. Store the sorted sequence in an array $\{v_i\}$ 
	\item for i = 1 to $|V^*|$ do\\
                      	 If the distance from the source of the i-th vertex $d[s,v_i]< i$, then the solution is not valid, output "no" and halt.
            \item Output the sequence $\{v_i\}$ as the corresponding firefighting strategy in correct order.
    
\end{enumerate}
\end{minipage}
}
\end{center}

Notice that $N[V']=V'\cup N(V')= V'\cup V^*$ in the context here. Therefore, the subgraph $G'=\{N[V'],E(V')\cup E(V',V^*)\}$ in step 1, is nothing but the induced subgraph $V'\cup V^*$ with the edge set $E(V^*)$ removed.

The above procedure is correct based on the following fact.

\begin{Fact} If $S^*=\{v_i\}$ is a satisfying strategy in correct order, the distance between the source $s$ and the i-th protected vertices $v_i$ on the subgraph $G'=\{V'\cup S^*,E(V')\cup E(V',S^*)\}$ has to be at least $i$.
\end{Fact}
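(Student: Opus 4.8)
The plan is to argue directly from the dynamics of the fire: after each step the fire advances by exactly one edge, so a vertex can only burn once the fire has had enough steps to travel to it from $s$. I would show that protecting $v_i$ at step $i$ is possible only if $v_i$ is still unburnt at the start of step $i$, and then convert ``still unburnt after $i-1$ spread phases'' into the distance bound $d[s,v_i]\ge i$. The engine is a timing lemma relating graph distance to the step in which a vertex catches fire.

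First I would establish the timing lemma: every burnt vertex $w$ catches fire at the step equal to its distance from $s$ measured \emph{inside the burnt subgraph} $(V',E(V'))$. This follows by induction on that distance. The base case is $s$, on fire at step $0$. For the inductive step, a burnt vertex $w$ at distance $j$ has a burnt neighbour $w'$ at distance $j-1$, which by induction is on fire by step $j-1$; since $w$ is burnt it is never protected, so at the spread phase of step $j$ the fire reaches $w$. The point that makes this exact is that no vertex of $V'$ is ever protected (the protected set $S^*$ is disjoint from the burnt set), so inside $V'$ the fire spreads completely unobstructed.

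Next I would transfer this to $v_i$. Because $G'$ contains no edge between two vertices of $S^*$, every neighbour of $v_i$ in $G'$ lies in $V'$; in particular the vertex $w$ preceding $v_i$ on a shortest path from $s$ to $v_i$ is a burnt vertex, and it sits at distance $d[s,v_i]-1$. Suppose $d[s,v_i]=\ell<i$. Provided the shortest path reaching $w$ stays inside $V'$, the timing lemma says $w$ is on fire by step $\ell-1$, so at the spread phase of step $\ell$ the fire would pass from $w$ to $v_i$ unless $v_i$ were already protected by step $\ell$. Since $v_i$ is protected only at step $i>\ell$, the vertex $v_i$ would already be burnt, contradicting that a strategy never protects a burnt vertex; hence $d[s,v_i]\ge i$.

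The step I expect to be the main obstacle is exactly the proviso used above: reconciling the BFS distance computed in $G'$ with the true step at which the fire arrives. The subgraph $G'$ keeps the edges $E(V',S^*)$ between burnt and protected vertices, so a shortest path in $G'$ is permitted to route through a protected vertex, whereas the fire can never cross a protected vertex. A protected vertex adjacent to two burnt vertices can thus create a shortcut that makes $d[s,w]$ in $G'$ strictly smaller than $w$'s genuine burn step, and then the chain ``$w$ on fire by step $\ell-1$'' breaks and the bound $d[s,v_i]\ge i$ can fail. To make the argument airtight I would restrict attention to shortest paths whose interior vertices all lie in $V'$, i.e.\ measure the distance inside $V'\cup\{v_i\}$ rather than inside all of $G'$, and then identify this restricted distance with the deadline by which $v_i$ must be protected. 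Ruling out the protected-vertex shortcuts (or adjusting the distance used in the verification procedure accordingly) is the delicate point on which a fully rigorous proof hinges.
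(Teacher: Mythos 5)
Your argument is, at its core, the same argument the paper gives, but done rigorously: the paper's entire proof is the two-sentence assertion that ``the fire spreads out by one unit in $G'$ per time step,'' so a distance smaller than $i$ would mean $v_i$ burns before step $i$. Your timing lemma (a burnt vertex catches fire at exactly its distance from $s$ inside $(V',E(V'))$, proved by induction using the fact that no vertex of $V'$ is ever protected) together with your transfer step is precisely this reasoning made formal, and it is valid under your proviso that the shortest $s$--$v_i$ path has all interior vertices in $V'$.

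The important point is that the proviso you flag is not a removable technicality: the Fact as literally stated is \emph{false}, and the paper's proof fails exactly where you say it does. Take vertices $s,a,b,y,g_1,g_2,g_3,g_4$ with edges $sa$, $ab$, $by$, $sg_1$, $g_1y$, $yg_2$, $yg_3$, $yg_4$, and protect $g_i$ at step $i$. Each protection is legal ($y$ only burns at the spread phase of step $3$, and fire leaves $y$ only at the spread phase of step $4$), the burnt set is $V'=\{s,a,b,y\}$, so this is a satisfying strategy in correct order (e.g.\ for $k=4$, $n=8$). Yet $G'$ contains the cross edges $sg_1$, $g_1y$ and $yg_4$, so $d[s,g_4]=3<4$: the protected vertex $g_1$, adjacent to the two burnt vertices $s$ and $y$, is exactly the shortcut you describe, and the claimed bound fails for $i=4$. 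Consequently your repaired statement --- measure the distance only over paths whose interior vertices are burnt, i.e.\ in the graph $(V'\cup\{v_i\},\,E(V')\cup E(V',\{v_i\}))$ --- is the correct form of the Fact, and your proof of it is complete, whereas the paper's proof silently identifies $G'$-distance with fire-arrival time, which is what the shortcut violates. Note that the gap propagates to the algorithm: on the example above, the verification procedure of Section~3 computes sorted distances $1,3,3,3$ and wrongly outputs ``no'' on a good coloring, so the completeness claim (``good coloring implies the procedure succeeds'') inherits the flaw, and your suggested adjustment of the distances used in the verification is the right repair there as well.
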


Fact 1 is obviously true as the fire will spread out by one unit in the subgraph $G'$, after the firefighter protect one vertex at each time step. If the distance between $s$ and the i-th protected vertices $v_i$ on the subgraph $G'$ is less than $i$, the vertex $v_i$ will be burnt by the fire before time step $i$.\\

Before calculating the complexity of the above algorithm, we have the following observation.

\begin{Obs}
If the number of the vertices burnt is upper bounded by k, there exists a satisfying strategy $S'$ with at most k protected vertices. i.e. $|S'|\le k$.
\end{Obs}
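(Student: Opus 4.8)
The plan is to bound the number of protected vertices by bounding the number of rounds the game lasts, exploiting the fact that the fire must consume at least one fresh vertex in every round except possibly the last. Since the given model protects exactly one vertex per round, the number of protected vertices equals the number $t$ of rounds played, so it suffices to show $t \le k$ for any satisfying strategy.

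First I would set up notation for the dynamics of a fixed satisfying strategy: write $B_i$ and $D_i$ for the burnt set and the protected set after round $i$, with $B_0 = \{s\}$ and $D_0 = \emptyset$. The game proceeds to round $i$ precisely when, after round $i-1$, the fire can still spread, i.e. $N(B_{i-1}) \setminus (B_{i-1} \cup D_{i-1}) \ne \emptyset$; otherwise the game has already terminated. In round $i$ the firefighter protects one vertex $v_i$, so $D_i = D_{i-1} \cup \{v_i\}$, and the fire then spreads to exactly $N(B_{i-1}) \setminus (B_{i-1} \cup D_i)$.

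The key step is the claim that every round $i$ with $i < t$ burns at least one new vertex. I would argue the contrapositive: if round $i$ burns nothing new, then $N(B_{i-1}) \setminus (B_{i-1} \cup D_i)$ is empty, so after protecting $v_i$ the fire cannot spread and the game ends at round $i$, forcing $i = t$. The only way a round can fail to produce a new burn is if the single vertex protected in that round happened to be the last unprotected neighbour of the burnt region, and this can occur only in the terminating round; in any earlier round, whether $v_i$ is an essential boundary vertex or a wasted protection, at least one unprotected neighbour of $B_{i-1}$ survives to catch fire. Combining the counting, the total number of vertices that ever ignite beyond the source is $|B_t| - 1 \le k - 1$, while rounds $1, \ldots, t-1$ each contribute at least one distinct new burn, so $t - 1 \le |B_t| - 1 \le k - 1$, giving $t \le k$ and hence $|S'| = t \le k$.

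I expect the main obstacle to be the careful treatment of the final round: one must permit the possibility that the last protection seals off the fire without igniting any new vertex, so that the terminal round may contribute zero burns, while simultaneously ruling out any earlier round contributing zero. Pinning down the off-by-one in the inequality — separating the initial burning of the source from the per-round burnings and excluding only the single terminal round — is where the argument needs the most care.
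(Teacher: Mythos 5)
Your proposal is correct and takes essentially the same route as the paper: the paper's proof likewise observes that the number of protected vertices equals the number of time steps $t$, and that each time step burns at least one vertex while the game has not ended, giving $|S'| = t \le k$. Your write-up merely expands the paper's two-sentence argument with the careful bookkeeping (the sets $B_i$, $D_i$, and the off-by-one handling of the terminal round) that the paper leaves implicit.
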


The above statement is true because the number of protected vertices is equal to the total time step t. Moreover, each time step at least one vertex is burnt if the game is not ended. Therefore, $|S'|=t \le k$. \\

Since the size of $V'$ and $S'$ are both bounded by k, we have a good coloring with $V'$ colored red and $S'$ colored green with probability at least $2^{-2k}$. Since the BFS algorithm and distance finding is of linear time, and the sorting procedure in step 2 of the verifying procedure takes only O($k\log k$) time, therefore the \textbf{Procedure for verifying if $V^*$ is a satisfying strategy set} takes linear time. After all, there exists a randomized FPT algorithm solving the problem in O($2^{2k}(n+m)$) time.\\ 

Moreover, we can use $\{n,2k\}$ universal set to derandomize it and get a deterministic FPT algorithm in O($2^{2k}(n+m) \log (n+m)$) time. This result matches the previous result of the model \textbf{Saving All But $k$ Vertices} in the special case of firefighter on tree \cite{CVY08}, which gives us a result of $2^{O(k)}n \log n$ time as $m=n-1$ for trees. \\

Furthermore, we can modify the above algorithm a little bit to solve the \textbf{Exactly $k$ Vertices Burnt} problem with the same time complexity. In the above algorithm, we run the \textbf{Procedure for verifying if $V^*$ is a satisfying strategy set} if $|V'|\le k$. If we modify the algorithm as we run the procedure only if $|V'|= k$, then we can solve the \textbf{Exactly $k$ Vertices Burnt} problem on general graph with the same time complexity.


\section{Maximum $k$-Vertex Protection on degree bounded graphs}

In this section, we are considering the \textbf{Maximum $k$-Vertex Protection} problem on degree bounded graphs. For the \textbf{Maximum $k$-Vertex Protection} problem, we call a strategy $S$ satisfying if the strategy $S$ contains at most $k$ vertices, and it saves maximum vertices on the graph $G$ from the fire. Note that the fire can no longer spread at time step t+1 after the firefighter protects the last vertex $v_t$, therefore \textbf{Maximum $k$-Vertex Protection} can be also regarded as a parameterized version of the firefighter problem when we bound the total number of time step $t \le k$. \\

First, we define a type of subgraph called $BFS\;Burning\;Tree$. Given the firefighter problem on a graph $G$ with a strategy $S$ with a sequence of protected vertices $V^*$, interconnect the source $s$ and vertex set $V^*$ by a tree $T$ of shortest $tree\;weight$, where the $tree\;weight$ is defined as the sum over the pairwise distances between the source $s$ and each vertex $v_i \in V^*$ on the subgraph $G-V^*\cup \{v_i\}$, with constraint that all vertices $V^*$ have to be a leaf of the tree $T$. Then $T$ is known as the $BFS\;Burning\;Tree$ of the strategy $S$. If the source $s$ and the vertex set $V^*$ are given, the $BFS\;Burning\;Tree$ of the strategy $S$ can be found in linear time by doing Breath-First Search starting from the source $s$, until all the vertices $V^*$ are reached.\\

A $Minimum\;BFS\;Burning\;Tree$ of a strategy $S$ on the graph $G$ is the $BFS\;Burning\;Tree$ with minimum number of vertices. Therefore, on a $Minimum\;BFS\;Burning\;Tree$ of a strategy $S'$, all the unnecessary edges and vertices which helps nothing on connecting the source $s$ with $S'$ have to be removed from the tree. From this definition, we can see that a $Minimum\;BFS\;Burning\;Tree$ of a strategy $S'$ is consist of either $|S'|+1$ or $|S'|$ leaves. Given that the number of protected vertices of an optimal strategy $|S_0|$ is at most $k$, we have the following lemma:

\begin{Lem}
In the firefighter problem with an optimal strategy $S_0$ of at most k vertices, the number of vertices of the $Minimum\;BFS\;Burning\;Tree$ of $S_0$ is at most $k^2+1$.
\end{Lem}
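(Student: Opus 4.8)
The plan is to bound the Minimum BFS Burning Tree by controlling its two governing quantities separately: the number of leaves (its ``width'') and the distance of each leaf from the source (its ``depth''), and then to observe that a rooted tree whose width and depth are both at most $k$ can contain at most $k^2+1$ vertices.

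First I would record the width bound. In the \textbf{Maximum $k$-Vertex Protection} problem at most $k$ vertices are protected, and since exactly one vertex is protected per round, the game runs for $t\le k$ rounds and $|S_0|=t\le k$. By the structural fact stated just before the lemma, every leaf of the Minimum BFS Burning Tree is either one of the protected vertices of $V^*$ or the source $s$; hence there are at most $t\le k$ leaves that we must actually reach from $s$, namely the protected vertices themselves.

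Second, and this is the crux, I would bound the depth of each protected vertex $v_i$. Writing $d_i$ for the distance from $s$ to $v_i$ on the subgraph $G-(V^*\setminus\{v_i\})$ that governs the tree weight, Fact 1 already supplies the lower bound $d_i\ge i$; the new content is the matching upper bound $d_i\le t$. The point is that along unprotected paths the fire advances exactly one unit of distance per round, and the game terminates after round $t$, so the fire can only ever reach vertices within distance $t$ of $s$. If some $v_i$ had $d_i>t$, then the fire would never traverse the path realizing $d_i$ (which avoids all other protected vertices), so $v_i$ would be saved even without protection; deleting $v_i$ from the strategy would save the same number of vertices. Restricting to an optimal strategy that carries no such redundant protection, every $v_i$ satisfies $d_i\le t\le k$, which is exactly the depth of the leaf $v_i$ in the shortest-path BFS Burning Tree.

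Finally I would assemble the count. Rooting the tree $T$ at $s$, every vertex lies on one of the $t$ shortest paths $P_i$ from $s$ to a leaf $v_i$ (each non-source vertex has a protected-vertex descendant, and $s$ lies on every path), so $T=\bigcup_{i=1}^{t}P_i$. Each $P_i$ contributes at most $d_i\le k$ vertices other than $s$, and there are $t\le k$ of them, whence
\[
 |V(T)| \;\le\; 1+\sum_{i=1}^{t} d_i \;\le\; 1 + t\cdot k \;\le\; 1 + k^2 .
\]
The main obstacle is the depth bound of the middle step. One must be careful that optimality by itself does not force $d_i\le t$, since an optimal strategy may include wasteful protections; the clean argument passes first to an optimal strategy that is minimal in containing no redundant protected vertex. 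One must also use precisely the distance on $G$ with the \emph{other} protected vertices deleted, both to match the definition of tree weight and to guarantee that the fire genuinely reaches $v_i$ along an unobstructed path.
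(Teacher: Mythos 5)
Your proposal follows the same skeleton as the paper's proof: at most $k$ leaves, each leaf within distance $k$ of the source in $G$ minus the other protected vertices, and then the count $1+k\cdot k$ edges, hence $k^2+1$ vertices. The width bound and your final assembly are exactly the paper's tree-weight argument, and both are fine. The genuine gap is in your depth bound, i.e.\ in the claim $d_i\le t$ and the way you justify it.

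You argue: since the fire only reaches vertices within distance $t$ of $s$, any $v_i$ with $d_i>t$ is never threatened, hence redundant, hence absent from a minimal optimal strategy. The flaw is that ``the fire can only ever reach distance $t$'' is a property of the game played with $S_0$, and it does not transfer to the game played with $S_0\setminus\{v_i\}$, because the game length is itself strategy-dependent. Concretely, when $d_i=t+1$ the fire may burn, in the final round, the neighbour of $v_i$ sitting at distance exactly $t$ on the path realizing $d_i$; the game stops at round $t$ precisely \emph{because} $v_i$ is protected. Remove that protection and the game does not stop: the fire burns $v_i$ at round $t+1$ and whatever lies beyond it. Such ``just-in-time'' protections are essential, not redundant, so passing to a minimal optimal strategy does not eliminate them; your deletion argument only excludes $d_i\ge t+2$. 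That yields $d_i\le t+1$ and $|V(T)|\le 1+k(k+1)=k^2+k+1$, which falls short of the claimed $k^2+1$. The paper avoids exactly this trap by keying the threshold to the strategy-independent horizon $k$ of the parameterized game rather than to the realized length $t$ of $S_0$'s game: if $d_i>k$, the fire cannot reach $v_i$ within the at most $k$ time steps the parameterized game can last under any strategy, so the firefighter can spend round $i$ protecting an additional vertex $v'\notin S_0$ and defer $v_i$ past the horizon, contradicting optimality of $S_0$. It is this exchange argument, not redundancy deletion relative to $t$, that delivers $d_i\le k$ for every protected vertex and hence the stated $k^2+1$ bound; to repair your proof you would need to replace your middle step by such an exchange.
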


\begin{proof}
Since $S_0$ is an optimal strategy with at most k vertices, for each vertex $v_i \in S_0$, the distance from the source $s$ to $v_i$ on the subgraph $G-S_0\cup \{v_i\}$ is at most $k$. Otherwise the fire cannot reach $v_i$ before time step $k$+1, so the firefighter does not need to protect $v_i$ before time step k+1. Then the firefighter can earn one more turn to protect an extra vertex $v' \notin S_0$ on the $i$-th round, and protect $v_i$ on the $k+1$-th round, which contradicts with the assumption that $S_0$ is an optimal strategy.\\

Since a $Minimum\;BFS\;Burning\;Tree$ of $S_0$ is a tree minimizing the $tree\;weight$, which is the sum over all pairwise distances between the source $s$ and each vertex $v_i \in S_0$ on the subgraph $G-S_0\cup \{v_i\}$. As the distance from the source $s$ to $v_i$ on the subgraph $G-S_0\cup \{v_i\}$ is at most $k$, the $tree\;weight$ is at most $k|S_0|\le k^2$. And the number of edges of the $Minimum\;BFS\;Burning\;Tree$ is upper bounded by the $tree\;weight$, so the number of vertices of it is at most $k^2+1$.
\end{proof}

Using Lemma 1, we can design a randomized FPT algorithm for the \textbf{Maximum $k$-Vertex Protection} on degree bounded graph using random separation. The main idea of the algorithm is to randomly separate the graph G into 3 parts, one part contains the optimal strategy $S_0$, one part contains the source $s$ and all the internal nodes of the $Minimum\;BFS\;Burning\;Tree$ of $S_0$, the third part contains the neighbours of ($Minimum\;BFS\;Burning\;Tree$ of $S_0 - S_0$). \\

The algorithm first colors the source $s$ red, then colors each vertex of G randomly and independently by either green, red or yellow with probability $1/3$. We call a coloring of G a $good\;coloring$ if there exists a maximum k-vertex optimal strategy $S_0$ such that all vertices in $S_0$ are green, and the tree $T'= (Minimum\;BFS\;Burning\;Tree$ of $S_0 - S_0)$ are red and the neighbours of the tree $T'$ are yellow. Notice that the number of vertices in $S_0$ is at most k. By Lemma 1, we also know that the number of vertices in $T'$ is at most $k^2-k+1$. If the graph $G$ is a degree bounded graph with degree at most $d$, then $|N(T')|\le k^2d$. By random separation, we can have a $good\;coloring$ with probability at least $2^{-k^2(d+1)}$. \\

\begin{center}
\includegraphics{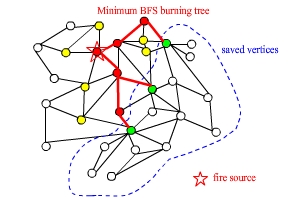}\\
\underline{Figure 2: A good coloring example on degree bounded graph}
\end{center}

Given a good coloring of $G$, we can find the specific satisfying strategy $S_0$ as the following. First, starting from the source $s$, do breath-first-search only on the red vertices to get a red subgraph $V'$ rooted at s. Since $T'$ is a red tree containing the source $s$, and all the neighbour of this red tree $T'$ are either yellow or green, but not red. Therefore, doing BFS from $s$ can help us to locate the tree $T'$, i.e. $V'=T'$. It takes linear time on number of vertices and edges of the subgraph $T'$. After the BFS, all of the green neighbours of $T'$ must be a satisfying strategy if the coloring is good, as all the non-solution neighbours of the red tree $N(T')-S_0$ are all colored yellow. Let $V^*$ be the green neighbours of T'. If $|V^*|\le k$, we check if the green vertex set $V^*$ is a valid strategy set for the problem using the {\bf  Procedure for verifying if $V^*$ is a satisfying strategy set} introduced in Section 3, with the input burnt vertex set $V'=T'$. If the coloring is good, this procedure will output $V^*$ in correct order as the optimal firefighting strategy $S_0$. \\

Similar to the time complexity analysis in Section 3, we can easily show the complexity of the above randomized algorithm is O($3^{k^2(d+1)}(n+m)$).


\section{Maximum $k$-Vertex Protection on unicyclic graphs}

In this Section, we are considering the \textbf{Maximum $k$-Vertex Protection} problem on $unicyclic$ $graphs$. A $unicyclic\;graph$ is a graph containing only 1 cycle. A $unicyclic\;graph$ can also be seen as a tree plus one extra edge. Given a unicyclic graph $G$ with a fire source $s$, let n be the number of vertices and edges of $G$. We would like to locate the optimal strategy $S_0$ with at most k vertices which saves maximum number of vertices.\\

In order to solve the problem on unicyclic graphs, we need to refer to a previous result of \textbf{Maximum $k$-Vertex Protection} problem on trees, given by Cai, Verbin and Yang\cite{CVY08}. In their paper, they introduce a FPT algorithm using $random\;separation$ to solve the \textbf{Maximum $k$-Vertex Protection} firefighter problem on trees. Their algorithm randomly and independently colors each vertex into either green or red. For a $good\;coloring$, all vertices in the satisfying strategy $S_0$ are green and all of their ascendants of vertices in $S_0$ on the tree are red. Given a $good\;coloring$, they can use greedy method to locate the k solution one-by-one in correct order by considering each level of the tree. They proved that this randomized algorithm runs in time $k^{-O(k)}n$. And it can be derandomized using $asymmetric\;universal\;sets$ introduced by Verbin \cite{Ver}, which helps us to get a deterministic algorithm that runs in time $k^{-O(k)}n\log n$.\\

We are going to use their result to show the \textbf{Maximum $k$-Vertex Protection} firefighter problem on unicyclic graphs are also FPT. The main idea of our algorithm is to separate the original problem into three cases, and handle them one by one. In each case, by observing the sequence of vertices being burnt, we find to transform the unicyclic graph $G$ to a tree $T'$ by removing vertices and edges on the cycle. Finally we can use the FPT algorithm given by Cai, Verbin and Yang to solve the firefighter problem on tree $T'$ and get the optimal solution of the graph $G$.\\

In the beginning, we need to find the cycle $C$ by doing Depth-First-Search from the source $s$. It takes only linear time. Then we can locate the vertex set $C$ of the cycle and the minimum distance from $s$ to cycle $C$. Let $l$ be the minimum distance from $s$ to cycle $C$. Generally, the fire source does not have to locate on the cycle, i.e. $l \ge 0$.\\

 Let $c_0$ be the vertex on the cycle $C$ which is nearest to the source $s$. Notice that $c_0$ is unique. If $l=0$, then the fire source $s$ locate on the cycle and $c_0=s$. Define path P to be the path connecting the source $s$ to $c_0$, where path P contains $l$ edges and $l+1$ vertices, including $s$ and $c_0$. Let $C$ contains r+1 vertices, where $2\le r \le n-1$. (If $r=0$ or $r=1$, then $G$ is not a simple graph. It contains either a loop or multiple edges.) We name the sequence of vertices on cycle $C$ to be $\{c_0,c_1,c_2,...,c_r\}$, starting with $c_0$ with clockwise direction.\\

Among all the vertices on the cycle $C$, $c_0$ is the one closest to the fire. If $c_0$ is saved, all the $c_i$ can be saved as $c_0$ block the only route that the fire can enter cycle $C$. It means that the fire can enter cycle C only if $c_0$ is burnt. Therefore, we try to use two FPT algorithms to consider the problem in the 3 cases below separately, and then combine them to get the optimal solution.


\begin{Case}
$c_0$ and the whole cycle $C$ will be saved at the end of the game.
\end{Case}

In case 1, we assume $c_0$ and the whole cycle $C$ will be saved at the end no matter what happens. To satisfy the criteria of this case, we modify graph G into a graph $T_1$, by replacing the whole cycle $C$ by a single vertex $c_0'$, linking to $2n$ individual vertices by $2n$ edges. Therefore, if G contains n vertices and n edges, $T_1$ is a tree containing $3n-r$ vertices and $3n-r-1$ edges. Then we have to use the following fact:

\begin{Fact}
The optimal solution $S_0$ of \textbf{Maximum $k$-Vertex Protection} firefighter problem on $unicyclic$ $graphs$ $G$, with constraint that $c_0$ has to be saved at the end of the game, is equivalent to the optimal solution $S_1$ of \textbf{Maximum $k$-Vertex Protection} firefighter problem on tree $T_1$.
\end{Fact}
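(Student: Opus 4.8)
The plan is to set up a correspondence between the Case 1 strategies on $G$ (those saving $c_0$) and the strategies on $T_1$, to show that the fire dynamics and the number of saved vertices agree up to an additive constant, and to conclude that optimal maps to optimal in both directions.

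First I would record the gateway property precisely. Since $c_0$ is the unique vertex of $C$ nearest $s$, and the path $P$ is the only connection between the $s$-side of the graph and the cycle, once $c_0$ is saved the fire can never enter $C$; hence every vertex of $C$, and every vertex of a subtree hanging off a cycle vertex, is automatically saved. Consequently, for any optimal strategy $S_0$ obeying the Case 1 constraint, no protection is ever spent on the cycle side: such vertices are already saved, so any protection placed there is wasted and may be rerouted to a still-threatened $s$-side vertex without decreasing the saved count. Thus I may assume that every vertex of $S_0$ lies on $P$ or in a subtree hanging off $P$, with $c_0$ itself as the only cycle vertex it is ever useful to protect (to realize the constraint, as an alternative to blocking earlier on $P$).

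Next I would use the $2n$ dummy leaves to show the constraint is automatic on $T_1$. If an optimal $S_1$ did not save $c_0'$, then $c_0'$ burns and each of its $2n$ leaves survives only if individually protected, so at most $k$ of them are saved; together with the at most $n-r-1$ remaining non-dummy vertices of $T_1$, and using $k\le n$, this saves strictly fewer than $2n$ vertices. Saving $c_0'$, on the other hand, saves all $2n$ dummies outright, a strictly larger total. Hence every optimal $S_1$ saves $c_0'$, and, exactly as in the previous paragraph, wastes no protection on the dummies or on vertices beyond $c_0'$; its protected set again lies on $P$ and its hanging subtrees (plus possibly $c_0'$). The heart of the argument is then that $G$ and $T_1$ are identical on the $s$-side: the path $P$ and all subtrees hanging off internal vertices of $P$ are untouched by the contraction, and $c_0$, $c_0'$ sit at the same distance $l$ from $s$. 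So for any protected set supported on the $s$-side, together with the choice to save $c_0\leftrightarrow c_0'$, the fire reaches the same vertices at the same times in both graphs, and a vertex is saved in one precisely when it is in the other. This gives a bijection between Case 1 strategies on $G$ and $c_0'$-saving strategies on $T_1$ under which the saved count changes only by the fixed quantity contributed by the fully-saved cycle side (the cycle and its subtrees in $G$, versus $c_0'$, its subtrees, and the $2n$ dummies in $T_1$), namely $2n-r$, independent of the strategy. Maximizing the saved count on either side therefore reduces to maximizing it on the common $s$-side, so $S_0$ is optimal on $G$ under the constraint if and only if the corresponding $S_1$ is optimal on $T_1$.

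The step I expect to be the main obstacle is the \emph{no wasted protection} reduction in both directions: I must argue carefully that an optimal strategy never benefits from protecting a vertex already guaranteed to be saved (a cycle vertex once $c_0$ is saved, or a dummy or beyond-$c_0'$ vertex once $c_0'$ is saved), and that any such protection can be rescheduled to the $s$-side without loss and still occur early enough to be useful. The remaining delicate point is the counting bound that forces $c_0'$ to be saved, which is precisely why $2n$ dummy leaves (rather than fewer) are attached.
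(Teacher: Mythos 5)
Your proposal is correct and follows essentially the same route as the paper: the paper's (much terser) justification is exactly that the $2n$ dummy leaves make $c_0'$ so valuable that any optimal strategy on $T_1$ must save it, after which the two problems coincide on the $s$-side. Your write-up simply fills in the details the paper leaves implicit — the gateway property of $c_0$, the explicit counting bound $(n-r-1)+k < 2n$ forcing $c_0'$ to be saved, and the strategy correspondence with the fixed saved-count offset $2n-r$.
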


Fact 2 is obviously true as $c_0'$ in $T_1$ is so important that a firefighter must keep it away from the fire, otherwises at least $2n$ vertices will be burnt. Therefore the 2 situations is totally equivalent. By solving the \textbf{Maximum $k$-Vertex Protection} on the tree $T_1$ using Cai, Verbin and Yang's FPT algorithm, we can get the solution $S_0$ of the problem on unicyclic graph $G$ based on the requirement of case 1.

\begin{center}
\includegraphics{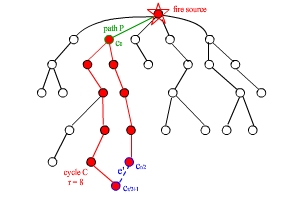}\\
\underline{Figure 3: Whole cycle C will be burnt}
\end{center}

\begin{Case}
$c_0$ and the whole cycle $C$ will be burnt at the end of the game.
\end{Case}

 In case 2, we assume the whole cycle $C$ will be burnt at the end, therefore the firefighter should not protect any vertices on path $P$ and cycle $C$. Since the fire will reach cycle $C$ starting from $c_0$, and none of the vertices on cycle $C$ get protected, so we have the following fact:

\begin{Fact}
If the fire burnt all the vertices on cycle $C$, the last vertex being burnt on cycle C, is either $c_{\lfloor r/2 \rfloor}$ or $c_{\lfloor r/2 \rfloor+1}$.
\end{Fact}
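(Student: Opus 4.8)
The plan is to analyze how the fire propagates around the cycle $C$ once it enters at $c_0$. Since Case 2 assumes that the entire cycle burns and that the firefighter protects no vertex on $P$ or $C$, the only entry point for the fire into the cycle is $c_0$, which becomes burnt at some time step, say $t_0 = l$ (the distance from $s$ to $c_0$ along $P$). From that moment on, the fire spreads along the cycle in both directions simultaneously and symmetrically, since no protection interferes. I would set up a clock: at time $t_0$ the fire occupies $c_0$; at time $t_0 + j$ it has reached $c_j$ travelling clockwise and $c_{r+1-j}$ travelling counterclockwise (indices read modulo $r+1$ on the vertex set $\{c_0,\dots,c_r\}$, recalling the cycle has $r+1$ vertices).

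The key step is to determine where the two fire fronts meet, which is exactly the last vertex to burn. Travelling clockwise from $c_0$, vertex $c_i$ is at distance $i$ along the cycle; travelling counterclockwise, $c_i$ is at distance $(r+1)-i$. Hence $c_i$ first catches fire at time $t_0 + \min\{\,i,\;(r+1)-i\,\}$. The last vertex to burn is therefore the one maximizing $\min\{\,i,\;(r+1)-i\,\}$ over $i \in \{1,\dots,r\}$. I would carry out the elementary maximization of $f(i)=\min\{i,(r+1)-i\}$: the two linear pieces cross at $i=(r+1)/2$, so the maximum is attained at the integer(s) nearest this value. A short case split on the parity of $r$ then pins down the answer: when $r$ is odd, $(r+1)/2$ is an integer and the unique maximizer is $c_{(r+1)/2}$; when $r$ is even, the two nearest integers $\lfloor r/2\rfloor$ and $\lfloor r/2\rfloor+1$ tie, and both are last to burn (reached simultaneously by the two fronts). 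In either case the last burnt vertex is $c_{\lfloor r/2\rfloor}$ or $c_{\lfloor r/2\rfloor+1}$, as claimed.

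The main obstacle I anticipate is purely bookkeeping rather than conceptual: getting the index arithmetic right so that the claimed vertices $c_{\lfloor r/2\rfloor}$ and $c_{\lfloor r/2\rfloor+1}$ genuinely match the maximizers of $f$ under both parities, and correctly handling the boundary between the clockwise labelling $\{c_0,c_1,\dots,c_r\}$ and the counterclockwise distances. I would double-check the two small cases ($r$ even versus $r$ odd) explicitly, and verify the meeting-point count against the total cycle length $r+1$ to make sure no vertex is double-counted or skipped. One subtlety worth stating clearly is the tie-breaking when both fronts reach a single vertex at the same step versus when they reach two adjacent vertices on the same step; isolating this is what produces the disjunction ``$c_{\lfloor r/2\rfloor}$ or $c_{\lfloor r/2\rfloor+1}$'' in the statement, so I would make the symmetry argument (both directions advance at unit speed with no protection) explicit to justify that the meeting is forced to occur diametrically opposite $c_0$.
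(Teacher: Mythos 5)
Your proof is correct: since all cycle vertices burn, none is protected, so once $c_0$ ignites the two fronts advance at unit speed in both directions, each $c_i$ burns at time $t_0+\min\{i,(r+1)-i\}$, and maximizing this over $i$ gives exactly $c_{\lfloor r/2\rfloor}$ or $c_{\lfloor r/2\rfloor+1}$ depending on the parity of $r$. The paper states this Fact without proof, treating it as obvious, and your two-front symmetry argument is precisely the reasoning the paper implicitly relies on, so there is nothing to reconcile.
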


Let $e'$ be the edge on the cycle $C$ connecting the 2 vertices $c_{\lfloor r/2 \rfloor}$ and $c_{\lfloor r/2 \rfloor+1}$. When the fire pass along edge $e'$, the whole cycle $C$ is completely burnt already. Therefore, we can first transform G to a tree $T_2$ by deleting edge $e'$. Then we use random separation method to solve the \textbf{Maximum $k$-Vertex Protection} firefighter problem on the tree $T_2$. To guarantee the whole cycle $C$ will be burnt at the end, we colors all the vertices on the path $P$ and cycle $C$ red, before randomly coloring all the other vertices in green or red. Then the optimal solution $S_2$ of the problem on the tree $T_2$ is exactly the optimal solution $S_0$ of the problem on the graph $G$ based on the requirement of case 2.


\begin{Case}
$c_0$ will be burnt but at least one vertex on the cycle $C$ will be saved at the end of the game.
\end{Case}

This is the most complicated case to consider among all the cases. As $c_0$ will be burnt, which means the whole path P will be burnt, and the fire can enter cycle $C$ and burn some of the vertices on $C$. However, the criteria "at least one vertex on the cycle $C$ will be saved" tells us that there exists at least one protected vertex on the cycle $C$, unless no other protected vertices can save vertices on $C$ from the fire.\\

The main idea to solve this case is to exhaust all the possible combinations of protected vertices located on cycle $C$ in the optimal solution $S_0$, then we can figure out which vertices on cycle C will be saved, and delete them from the graph to get a tree. To ensure the algorithm runs in FPT time, we need the following important fact:

\begin{Fact}
Let $S_0$ be the optimal strategy of the problem on the unicyclic graph $G$ with a cycle $C$, then the number of vertices in $C\cap S_0$ is at most 2.
\end{Fact}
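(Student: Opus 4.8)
The plan is to exploit a simple structural fact: once two vertices of the cycle are protected, the arc strictly between them — together with every pendant tree hanging off that arc — is completely sealed off from the fire, so any protection placed inside it is wasted. Proving Fact~4 then amounts to showing that a third protected cycle vertex can always be deleted without losing any saved vertex.

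First I would record the structure coming from the unicyclic hypothesis. Deleting the $r+1$ cycle edges of $C$ disconnects $G$ into a forest whose components $B_0,\dots,B_r$ each contain exactly one cycle vertex, with $c_i\in B_i$; indeed, if two cycle vertices stayed connected after this deletion, combining that path with a cycle arc would produce a second cycle, contradicting unicyclicity. Consequently the fire can reach the interior of $B_i$ only through $c_i$, and can reach $c_i$ only through one of its two cycle neighbours. In Case~3 the vertex $c_0$ is burnt, so the fire enters the cycle at $c_0$ and runs along it in the two directions $c_0,c_1,c_2,\dots$ and $c_0,c_r,c_{r-1},\dots$, while the source $s$ sits in the component $B_0$.

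Next I would argue by contradiction. Suppose an optimal strategy $S_0$ protects $p\ge 3$ cycle vertices, listed by index as $c_{j_1},\dots,c_{j_p}$ with $1\le j_1<\cdots<j_p\le r$ (none is $c_0$, which burns). A protected vertex is never burnt and blocks the fire from crossing it in either direction. The only edges joining the arc $c_{j_1},c_{j_1+1},\dots,c_{j_p}$ and its pendant trees to the rest of $G$ are the two cycle edges $(c_{j_1-1},c_{j_1})$ and $(c_{j_p},c_{j_p+1})$, and both have their arc-side endpoint protected. Since the fire originates outside this region (at $s\in B_0$), it can never enter it; hence the whole arc, and in particular $c_{j_2}$ together with its pendant tree, is saved regardless of whether $c_{j_2}$ is itself protected. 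Deleting $c_{j_2}$ from $S_0$ therefore yields a valid strategy saving exactly the same set of vertices while using one fewer cycle protection (the freed time step is simply skipped). Repeating this drives the number of cycle protections down to at most $2$ without decreasing the number saved, giving an optimal strategy with $|C\cap S_0|\le 2$.

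The hard part will be the sealing argument of the third paragraph: making precise that two protected cycle vertices isolate the intervening arc and all trees hanging from it. This is exactly where the unicyclic hypothesis is essential — in a graph with a second cycle the arc could be re-entered by fire along the other cycle — so the proof must lean on the forest decomposition $B_0,\dots,B_r$. A secondary subtlety worth flagging is that optimal strategies need not be unique, so the cleanest reading of the statement is that \emph{we may take} an optimal strategy with at most two cycle protections; the exchange argument above supplies precisely such a strategy.
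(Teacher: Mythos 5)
Your proof is correct and is essentially the paper's own argument: both rest on the observation that two protected cycle vertices seal off the arc between them (the arc avoiding $c_0$) together with its pendant trees, so any further protected vertex on the cycle is redundant and can be deleted from the strategy. The differences are organizational, and mostly in your favour: the paper picks a generic pair $u_1,u_2$ cutting $C$ into $C_{up}\ni c_0$ and $C_{down}$, and then case-analyzes where a third vertex $u_3$ lies (in $C_{down}$ it is already saved by $u_1,u_2$; in $C_{up}$ it makes one of $u_1,u_2$ redundant), whereas you take the extremal pair $c_{j_1},c_{j_p}$, which seals every intermediate protected vertex simultaneously and eliminates the case analysis; your forest decomposition $B_0,\dots,B_r$ also makes the sealing step rigorous, which the paper leaves informal. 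One point to repair: Fact 4 is stated for an arbitrary optimal strategy on the unicyclic graph, not only under the Case 3 hypothesis, and the paper's proof accordingly opens with the complementary case in which $c_0$ is saved (there at most one protected vertex on $C$ is ever needed). Your argument assumes $c_0$ burns; to match the statement as written, add the one-line observation that if $c_0$ is never burnt the fire never leaves $B_0$, so every protected vertex of $C$ other than possibly $c_0$ is redundant, giving a bound of 1 in that case. Finally, your closing remark that the Fact should be read existentially (some optimal strategy has at most two cycle protections) applies equally to the paper's proof, which likewise proceeds by deleting redundant protections from a given optimal strategy.
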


\begin{proof}
If $c_0$ is saved, we only need to protect one vertex on the path P in order to save the whole cycle $C$, therefore we never need to protect any vertices on $C$ except for $c_0$, then at most one protected vertex on $C$ would be enough.\\

Consider the case that $c_0$ is burnt. Let $q$ be the number of vertices in $S_0$ located on cycle $C$. We call those vertices $U = \{u_1,u_2,...,u_q\}\subseteq S_0$. Assume $q \ge 3$. We can always find two vertices in $U$, say $u_1$ and $u_2$, cut the cycle $C$ into two half. Let's call these two half $C_{up}$ and $C_{down}$, with $C_{up}$ contain the vertex $c_0$. \\

Consider another vertex $u_3$, if $u_3$ is on $C_{down}$, then $u_3$ is already saved by the 2 protected vertices $u_1$ and $u_2$ as they block the only two route that the fire can enter $C_{down}$. It is no point for the firefighter to include $u_3$ as a protected vertex in the optimal strategy. If $u_3$ is on $C_{up}$, then either $u_3$ and $u_1$ save $u_2$, or $u_3$ and $u_2$ save $u_1$ from the fire, either $u_1$ or $u_2$ becomes pointless and can be removed from the optimal strategy $S_0$. i.e. the set $U$ should contain at most 2 vertices.
\end{proof}

Fact 4 tells us that only 2 protected vertices on the cycle $C$ is sufficient for us to construct an optimal strategy $S_0$. Besides, by the requirement of case 3, at least one vertex on the cycle $C$ has to be protected. Exhaust for all pairs of vertices $u_1$ and $u_2$ on the cycle $C$, where $u_1$ can be equal to $u_2$. We assume $u_1$ and $u_2$ are the only protected vertices in $S_0$ on the cycle $C$, and they cut the cycle $C$ into two half $C_{up}$ and $C_{down}$, with $c_0$ located on the half $C_{up}$. We can easily see that the vertices on $C_{up}$ will be all burnt and the vertices on $C_{down}$ will be all saved. \\

For each pair of $u_1,u_2$,transform the graph $G$ to a tree $T_3(u_1,u_2)$ by removing all the vertices and edge on $C_{down}$, and also all the subtrees originally rooted at the vertices on $C_{down}$. Then we can use random separation to solve the problem on the graph $T_3(u_1,u_2)$, by pre-coloring the paths $P$ and $C_{up}$ to red, and $u_1,u_2$ to blue. Then the optimal strategy $S_3(u_1,u_2)$ of the problem on $T_3(u_1,u_2)$ is exactly the optimal strategy of the problem on graph $G$, condition on the criteria of case 3 and both $u_1$ and $u_2$ are the element of $S_0$.\\
 
\begin{center}
\includegraphics{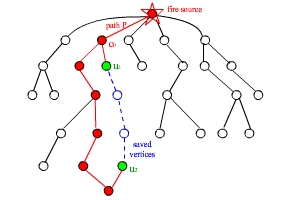}\\
\underline{Figure 4: Exhaustion for vertex $u_1$ and $u_2$}
\end{center}

In the exhaustion in case 3 , the choices of $u_1$ and $u_2$ are both bounded by the size of the cycle. Therefore we need to use Cai, Verbin and Yang's FPT algorithm O($n^2$) times in each choice of $u_1,u_2$. As all the transformations from a graph $G$ to a tree $T'$ mentioned above take only linear time. Combining the algorithms in all 3 cases, finally we construct a FPT algorithm runs in time $k^{-O(k)}n^3$. And it can be derandomized using $asymmetric\;universal\;sets$ to get a deterministic algorithm that runs in time $k^{-O(k)}n^3\log n$.\\

Actually, the algorithm above can be generalized to solve the firefighter problem in the family of graph $tree + b\;edges$, for any small constant $b$. In unicyclic graph, $b=1$, and the randomized FPT algorithm runs in time $k^{-O(k)}n^3$. Generally, one can use a similar approach to construct an algorithm solving the firefighter problem on $tree + b\;edges$ in time $k^{-O(k)}n^{(2b+1)}$. Unfortunately, it is not FPT time if both $k$ and $b$ are parameters of problem, as the complexity depends on $n^{(2b+1)}$, but the algorithm is also efficient when $b$ is a small fixed constant.\\

Here we omit the details on constructing algorithm of the \textbf{Maximum $k$-Vertex Protection} firefighter problem on $tree + b\;edges$ in $k^{-O(k)}n^{(2b+1)\log n}$ time. The main idea is to modify the exhaustion in case 3, from 2 vertices $u_1,u_2$ on cycle $C$ to $2b$ vertices $u_1,u_2,...,u_{2b}$ on the cyclic subgraph based on the fact below:

\begin{Fact}
Let $S_0$ be the optimal strategy of the problem on the $tree + b\;edges$ graph $G$, there exists a subgraph $G'$ where $G-G'$ is a forest, and the number of vertices in $G'\cap S_0$ is at most 2b.
\end{Fact}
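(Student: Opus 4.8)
My plan is to take $G'$ to be the \emph{cyclic core} of $G$ (its $2$-core), obtained by repeatedly deleting degree-$1$ vertices. Deleting a leaf changes neither the cycle space nor whether the remainder is acyclic, so $G'$ is exactly the union of all cycles of $G$: it still carries all $b$ independent cycles (cycle rank $b$), and $G-G'$ consists only of the pendant trees hanging off $G'$ and is therefore a forest. With this choice the two structural requirements hold for free, and the whole content of the statement becomes the inequality $|G'\cap S_0|\le 2b$, a bound on the number of protected vertices of the optimal strategy lying on the cyclic part. This is the direct generalization of Fact 4, where for a single cycle ($b=1$) the bound was $2$.

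First I would normalize $S_0$. Among all optimal strategies pick one minimizing $P:=G'\cap S_0$. Then every $p\in P$ must be adjacent to a burnt vertex: if all neighbours of $p$ were saved, $p$ would remain saved even unprotected (its pendant trees in $G-G'$ cannot carry fire back to $p$, since they burn only through $p$ itself), so dropping $p$ gives an equally good strategy with smaller $P$, a contradiction. More importantly I would prune \emph{redundant} protections: if some $p\in P$ is already separated from every burnt vertex by the remaining vertices of $P$, then $p$ is saved whether or not we protect it, so we may drop it as well. Hence we may assume $P$ is an irredundant (inclusion-minimal) separator in $G'$ between the burnt region and the saved region, with every vertex of $P$ genuinely needed to shield some saved vertex.

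The heart of the argument is then a cycle-rank bound on such a separator, generalizing the ``two cuts close off a saved arc'' phenomenon of a single cycle. Contract the burnt side of $P$ to a single vertex $\beta$ and the saved-unprotected side to a single vertex $\sigma$; since we contract connected subgraphs this does not raise the cycle rank, so the resulting (multi)graph $\widehat G$ still has cycle rank at most $b$. By irredundancy every $p\in P$ has both a burnt neighbour and a saved-unprotected neighbour, i.e.\ $p$ is adjacent to both $\beta$ and $\sigma$, while $\beta\not\sim\sigma$ because no burnt vertex borders a saved unprotected vertex. Thus each $p$, with its edges $p\beta$ and $p\sigma$, forms a length-two path between the two poles; these $|P|$ internally disjoint paths form a generalized theta graph of cycle rank $|P|-1$, a subgraph of $\widehat G$. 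Therefore $|P|-1\le b$, giving $|P|\le b+1\le 2b$, which yields the desired bound (in fact slightly more than the claimed $2b$).

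I expect the pruning/normalization step to be the main obstacle: I must argue carefully that removing a redundant protected vertex never decreases the number saved once the freed time steps and the pendant forest are accounted for, and that the surviving set $P$ really is a minimal separator whose burnt and saved sides are \emph{full} components, so that each $p$ is adjacent to both contracted poles. A more robust alternative, which lands exactly on $2b$ and sidesteps the full-component bookkeeping, is induction on $b$: select a non-tree edge and its fundamental cycle, invoke the Fact 4 argument to show that at most two vertices of $S_0$ lie on that cycle and that cutting the cycle there drops the cycle rank by one, reducing to a $tree + (b-1)\;edges$ instance for which the inductive hypothesis gives at most $2(b-1)$; summing yields at most $2b$.
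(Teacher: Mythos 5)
Your choice of $G'$ as the $2$-core is legitimate (the Fact only asks for \emph{some} $G'$, and $G$ minus its $2$-core is indeed a forest attached to it at single vertices), and your first normalization (every $p\in P$ in a $P$-minimizing optimal strategy has a burnt neighbour, else dropping $p$ changes nothing) is sound. The fatal gap is the contraction step. You contract ``the saved-unprotected side'' to one vertex $\sigma$ and claim the cycle rank cannot go up because you ``contract connected subgraphs.'' The burnt region is connected (fire spreads from a single source), but the saved-unprotected region is in general badly \emph{disconnected}: it is precisely the union of the separate pockets that the protected vertices cut off, and identifying a vertex set with $c$ components to a single vertex can raise the cycle rank by $c-1$. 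This is not repairable bookkeeping, because the conclusion you derive, $|P|\le b+1$, is false. Take $G$ to be a source $s$ joined by one edge to a vertex $c_0^{(i)}$ of each of $b$ long disjoint cycles; this is a tree plus $b$ edges, and its $2$-core is all of $G$. With $k=2b-1$, every optimal strategy protects one entrance $c_0^{(j)}$ in round $1$ (saving that entire cycle) and then a \emph{pair} of vertices on each of the other $b-1$ burning cycles (a single protected vertex on a burning cycle saves only itself, since the fire wraps around, while a pair saves a long arc). Hence $|G'\cap S_0|=2b-1$, which exceeds $b+1$ for every $b\ge 3$. In this example the saved region has about $b$ components; feeding that into your own inequality gives $|P|-1\le b+(b-1)$, i.e.\ $|P|\le 2b$, consistent with the Fact but only after abandoning the connectivity claim your proof rests on. (A smaller warning sign: a vertex protected solely to shield its own pendant subtree need not have any saved-unprotected neighbour inside $G'$ at all, so even the theta-graph adjacency structure requires contracting in $G$, worsening the disconnectedness.)

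Your fallback sketch at the end---induction on $b$, handling one fundamental cycle at a time via the Fact~4 argument---is essentially the paper's proof, whose entire content is the sentence that the Fact ``can be deduced by Fact~4 by considering the $b$ smallest cycles induced by the extra $b$ edges.'' But you leave it as a one-sentence alternative, and it too needs care: the proof of Fact~4 uses that fire can enter the cycle only through the unique vertex $c_0$, which is no longer true for one fundamental cycle sitting inside a graph that has $b-1$ other independent cycles attached to it. So as submitted, the main argument proves a false statement and the backup is not developed; the fact that your bound $b+1$ came out \emph{stronger} than the claimed $2b$ was itself the signal that something had gone wrong.
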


The above fact can be deduced by fact 4 by considering the $b$ smallest cycles induced by the extra $b$ edges.\\

As Cai, Verbin and Yang show in their paper that the \textbf{Saving k Vertices} firefighter problem on trees has a polynomial kernel of size $k^2$ \cite{CVY08}, one may be curious if the problem on $tree + b\;edges$ graph also has a polynomial kernel or not. If there exists a polynomial kernel, we can reduce the complexity of the above algorithm to FPT time even for large b. However, since both the \textbf{Saving k Vertices} and \textbf{Maximum $k$-Vertex Protection} firefighter problem on general were proven to be W[1]-hard, and any general graph can be consider as a $tree + b\;edges$ graph if $b$ is not bounded, therefore it seems very uneasy to find a polynomial kernel in this situation.


\section{Weighted and unequal valued graph}
In the previous discussion, we only consider the firefighter problem on unweighted graph, i.e. all the edges have equal distance, therefore the fire spreads out uniformly to the unprotected neighbours of the burnt vertices at each time step. Besides, all the vertices are assumed to be equally valuable, and all the edges are set to have no values. In this section, we will focus on the firefighter problem on weighted graph and unequal valued graph.\\

We define the weighted firefighter problem as follow: A fire initially breaks out at a vertex r on a weighted graph G. In each time step $i$, a firefighter chooses to protect one vertex $v_i$, which is not yet burnt. Then for each edge $e'=(u,v)$ with an integral edge weight $w_{e'}$, where $u$ is already on fire at time $i-w_{e'}$ and $v$ is not yet on fire, the fire spreads out from $u$ to $v$ along the edge e and burn $v$ at time $i$. The objective of the problem is to choose a sequence of vertices $\{v_i\}$ to protect, in order to save maximum number of vertices from the fire. The original firefighter problem on unweighted graph is the weighted firefighter problem with all the edge weights equal to 1.\\

For any weighted graph $G$ of maximum weight $w_m$, we try to transform $G$ to an unweighted graph $G'$ by local replacement. For each edge $e=(u,v)$ with edge weight $w_e=0$, we contracts these edges and combine each (u,v) pair to be a single vertex. For each edge $e=(u,v)$ with edge weight $w_e>1$, we replace it by a path $P$ of $w_e$ edges by adding $w_e-1$ vertices between u and v. If $G'$ follows certain properties, for instance, $G'$ is degree bounded or unicyclic, then we can try to find the solution of the problem on graph $G'$ using random separation by pre-coloring all the newly-added vertices to yellow (or red in unicyclic case). Then we can apply the known FPT algorithm on unweighted graph to the weighted firefighter problem, where the time complexity would be almost the same (with a multiplication factor of $w_m$).\\

Now we consider the case that each vertices on the graph $G$ have unequal value, i.e. some vertices are more valuable that we have more motivation to save them from the fire. Also, we assumes the edges contain values too, therefore we are motivated to save valuable edges. A edge $e=(u,v)$ is said to be saved if both vertices u and v are saved, otherwises it is a burnt edge. This model is more realistic in the application of firefighter problem on diseases or computer-viruses control. Here we assign each vertex $v$ a "value" $z_v$, each edge $e$ a "value" $z_e$, and the objective of the $valued-firefighter$ problem is to choose a sequence of vertices $\{v_i\}$ to protect, in order to maximize the total value Z of saved vertices and saved edges. The original equally valued firefighter problem is the valued firefighter problem with all the vertex values equal to 1 and all the edge values equal to 0. \\

For any valued graph $G$ of maximum value $z_m$, we try to transform $G$ to an equally valued graph $G'$ by local replacement. For each edge $e=(u,v)$ with edge value $z_e \ge 1$, we replace it by a path $P$ of $z_e+1$ "value 0" edges by adding $z_e$ "value 1" vertices between u and v. For each vertex $v$ with value $z_v > 1$, we replace it by a path $P$ of $z_v$ "value 1" vertices by adding $z_v-1$ "value 0" edges. After all, we can also try to find the solution of the problem on graph $G'$ using random separation by pre-coloring all the newly-added vertices to yellow (or red in unicyclic case), which runs in the same time complexity of the originally problem (with a multiplication factor of $z_m$), similar to what we mention above in the weighted graph case.\\

Notice that the above algorithm does not work if there exists vertex with value 0. Actually, in the special case of firefighter problem on trees, the well-known \textbf{Saving k Leaves} problem is special case of firefighter problem on valued graph where internal vertices have value 0. And the parametric dual \textbf{Saving all But k Leaves} is NP-complete even for k=0, shown by Finbow et. al. \cite{FKMR07}, so there is no FPT algorithm for the problem \textbf{Saving all But k Leaves} unless P=NP. In Section 3 of this report, we show that \textbf{k Vertices Burnt} of equally valued firefighter problem on general graph is FPT. If there exists a reduction method to reduce any valued graph $G$ with 0-value vertices to an equally valued graph, then we show \textbf{Saving all But k Leaves} is FPT and prove P=NP. Therefore, it is not surprising that the above algorithm cannot reduce a valued graph with 0-value vertices to equal valued graph.


\section{Multiple fire sources, multiple protection and multiple burning}
In this final section, we are considering the firefighter problem with multiple fire sources and multiple firefighters. Multiple fire sources means that there are more than one sources (say g sources) on fire initially. Multiple protection means that the firefighter can protect more than one vertices, say p vertices, before the fire spreads out in each time step. Multiple protection means that the fire spreads out by several units length, say h units, after firefighter's action in each time step.\\


The firefighter problem with multiple fire sources on general graph is nothing but just an ordinary firefighter problem by combining all the g fire sources into one single source. Let $S=\{s_1,s_2,...,s_g\}$ be the g sources on the graph $G$,  we define $G'=merge(G;S)$. The result of \textbf{At Most k Vertices Burnt} problem on general graph $G'$ in Section 3 is the same in the result of the problem on graph $G$. For the \textbf{Maximum $k$-Vertex Protection} problem on degree bounded graph, we can also combine all the g fire sources into one single source. The only difference is the maximum degree of the source in G' is no longer $d$, but becomes $gd$. Therefore the complexity of the randomized algorithm is O($3^{(k^2+g)(d+1)}(n+m)$).\\

\begin{center}
\includegraphics[width=90mm]{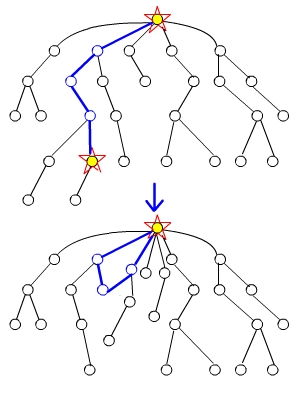}\\
\underline{Figure 5: Transformation from 2 fire sources on tree to 1 fire source on unicyclic graph}
\end{center}

Similarly, by combining all the g fire sources into one single source, the firefighter problem with multiple fire sources on $trees$ or $unicyclic\;graphs$ can be transform to the ordinary firefighter problem on $tree + (g-1)\;edges$ and $tree + g\;edges$ graph respectively. For example, consider the firefighter problem with 2 fire sources on a tree T,  by combining the two fire sources $s_1$ and $s_2$ into a single vertex, the tree T is transform to be an unicyclic graph $G'=merge(G;\{s_1,s_2\})$. Then we can use the FPT algorithm mentioned in Section 5 to solve it in $k^{-O(k)}n^3$ time. If $g$ is a small fixed constant, the firefighter problem with multiple fire sources on  $tree + b\;edges$ can be transformed to the firefighter problem with single fire source on $tree + (b+g-1)\;edges$, and there exists an algorithm runs in $k^{-O(k)}n^{(2b+2g-1)}$ time.\\


The firefighter problem with multiple protection and multiple burning is also very simple. By changing the parameter $k$ to be total number of rounds, we define a firefighter problem called  \textbf{Maximum $k$-Step Protection} with multiple protection and multiple burning as follow: \\

We want to find an optimal strategy $S_0$ = $\{\{v_{11},...,v_{1p}\},...,\{v_{k1},...,v_{kp}\}\}$, where $k$ is the total number of time steps of the game and $p$ is the number of vertices a firefighter can protect in each time step. This optimal strategy $S_0$ can save maximum number of the vertices from the fire, where the fire burns $h$ layers of neighbours of the vertices on fire in each time step.\\

Without multiple protection and multiple burning, the problem \textbf{Maximum $k$-Step Protection} can be reduced to  \textbf{Maximum $k$-Vertex Protection}. Otherwise, we can solve this problem in FPT time using the similar strategy as above. For example, considering the \textbf{Maximum $k$-Step Protection} firefighter problem with multiple protection and multiple burning on degree bounded graphs, we can treat the problem as an ordinary \textbf{Maximum $\lceil kp/h \rceil$-Vertex Protection} firefighter problem. Using the algorithm provided on Section 4, with $\lceil kp/h \rceil$ strategy vertices green, $\lceil kp/h \rceil^2$ BFS Burning Tree vertices red, and $\lceil kp/h \rceil^2d$ neighbouring vertices yellow as good coloring. Then we can do a BFS from the source $s$, and check if all the green leaves satisfy the criteria of optimal strategy or not. Notice that the {\bf  Procedure for verifying if $V^*$ is a satisfying strategy set} given in Section 3 also has to be modified a little bit as below:\\

\begin{center}
\fbox{
\begin{minipage}[l1pt]{5.50in}
{\bf  Modified Procedure for verifying if $V^*$ is a satisfying strategy set for multiple protection and burning} \\

{\bf Input}: graph G, source s, integer p and h, vertex set $V^*$ to be verified, and burnt vertex set $V'$\\
{\bf Output}: If true, output the sequence of satisfying strategy in correct order. If false, output "no".
\begin{enumerate}
	\item Do BFS on the subgraph $V'$ starting from s, find the distance $d[s,v]$ of the each vertices $v$ on $V^*$ from the source $s$ on the subgraph $G'=\{N[V'],E(V')\cup E(V',V^*)\}$.  
	\item Sort the vertices $V^*$ according to $d[s,v]$ in ascending order. Store the sorted sequence in an array $\{v_i\}$ 
	\item for i = 1 to $|V^*|$ do\\
                      	 If the distance from the source of the i-th vertex {\bf satisfies the relation $\lceil d[s,v_i]/h \rceil< \lceil i/p \rceil$}, then the solution is not valid, output "no" and halt.
            \item Output the sequence $\{v_i\}$ as the corresponding firefighting strategy in correct order.
    
\end{enumerate}
\end{minipage}
}
\end{center}

Therefore, we can solve the \textbf{Maximum $k$-Step Protection} firefighter problem with multiple protection and multiple burning in the degree bounded graph in FPT time. Similarly, by modifying the algorithm in Section 5, one can also construct an FPT algorithm for the \textbf{Maximum $k$-Step Protection} in the unicyclic graph.

\section{Acknowledgments}
I gratefully acknowledge the support of Computer Science and Engineering Department, the Chinese University of Hong Kong. I have to thank Professor Leizhen Cai, for the innoviative teaching and offering continuing advice and guidance. I am also thankful to my master supervisor, Professor Shengyu Zhang, who have been giving me much support and encouragement on my study and research.


\bibliography{firefighter}
\bibliographystyle{amsplain}

\end{document}